\newcommand{\bg}{\boldsymbol{g}}
\newcommand{\btheta}{\boldsymbol{\theta}}
\newcommand{\bSigma}{\boldsymbol{\Sigma}}
\newcommand{\bzero}{\boldsymbol{0}}
\newtheorem{theorem}{Theorem}
\newenvironment{proof}[1][Proof]{\textbf{#1.} }{\ \rule{0.5em}{0.5em}}
\numberwithin{equation}{section}
\numberwithin{algo}{section}
\numberwithin{table}{section}
\numberwithin{figure}{section}
\begin{document}

\normalsize

\title{\vspace{-0.8in}
A Note on the Misuse of the Variance Test in Meteorological Studies}
\author{
Arnab Hazra, Sourabh Bhattacharya, Sabyasachi Bhattacharya, Pabitra Banik
\thanks{
Indian Statistical
Institute, 203, B. T. Road, Kolkata 700108.
Corresponding e-mail: sourabh@isical.ac.in.}}
\date{\vspace{-0.5in}}
\maketitle%

\begin{abstract}

The erroneous assumption ``for all distributions for which the theoretical variance can be computed independently 
from parameters estimated by any method different from the method of moments" has been used in the case of fitting 
the gamma distribution to a rainfall data by Mooley (1973) which was followed by several researchers. 
We show that the asymptotic distribution of the test statistic is generally not even comparable to any 
central chi-square distribution. We also describe a method for checking the validity of the asymptotic distribution 
for a class of distributions.
\\[2mm]
{\bf Keywords:} {\it Asymptotic theory; Chi-square test; $P$-value; Null distribution; Rainfall data; Variance ratio test.}
\end{abstract}

\section{Introduction}
The variance ratio test statistic provides a measure of goodness-of-fit. In the spirit of the pioneering idea 
of Fisher (1925) as an index of dispersion, Cochran (1954) efficiently used and popularized this test,  
illustrating with examples in the case of small samples from the Poisson and the Binomial series. The test statistic 
was referred to as the central chi-square distribution with degrees of freedom one less than the sample size 
in both the cases, whether the parameters are specified or not, and a proof of this fact was given by 
Rao and Chakravarti (1956) for the Poisson series. For large sample size, a modified form of the test statistic 
was proposed by Fisher and Yates (1957) so that its asymptotic distribution corresponds to the standard normal density.
 
Perhaps due to ignorance related to the asymptotic theory of the variance ratio test statistic, Mooley (1973) committed 
 a misuse of this goodness-of-fit test in the case of fitting a gamma distribution to Asian summer monsoon rainfall 
 data, assuming that the test can be used for all distributions for which the theoretical variance can be 
 computed independently of parameters estimated by a method other than the method of moments. He also assumes 
 that the asymptotic distribution of the test statistic in such a situation would be a chi-square distribution 
 with degrees of freedom one less than the sample size. A significant number of other authors (Hargreaves (1975), 
 Sarker et al. (1982), Biswas et al. (1989), Goel and Singh (1999)) followed the similar misuse synergistic 
 with the work of Mooley (1973).

As far our knowledge is concerned, no potential work has yet explored the fact that the implementation of the variance 
ratio test by Mooley (1973) was incorrect. Several studies have been conducted on rainfall analysis and the 
best fit probability distribution function such as the gamma distribution function (Barger and Thom (1949), Mooley and Crutcher (1968), 
Sen and Eljadid (1999)), log-normal (Sharma and Singh (2010), Kwaku et al. (2007)), exponential (Duan et al. (1995), 
Burgueno et al. (2005), Todorovic and Woolhiser (1975)), Weibull (Duan et al. (1995), Burgueno et al. (2005)) 
distributions were identified under different situations. Our simulation study suggests that when the data sets 
have the proximity to any one of exponential, gamma, Weibull, log-normal, the usual asymptotic distribution of the 
test statistic is no longer even central chi-square and thus the variance ratio test can not be used under any 
of the circumstances. Below we provide a brief overview of the issues involved.

\subsection{Variance ratio test and its misuse}
\label{subsec:test_stat}

Suppose we want to fit the random sample  $X_1,\ldots, X_n$ to a distribution whose cumulative distribution function 
is given by $F$. We consider the following hypothesis testing problem -- 
$H_0$: the sample comes from the distribution $F$,
versus $H_1$ : the sample does not come from the distribution $F$. The test statistic proposed by Fisher (1925) 
and illustrated by Cochran (1954) is 
\begin{eqnarray}
\nonumber {\chi_\nu}^2 &=& \sum_{i=1}^{n}\frac{{(X_i-\overline{X})}^2}{\widehat{{\sigma_F}^2}}, 
\end{eqnarray}
where $\widehat{{\sigma_F}^2}$ is the estimate of the population variance computed independently of parameters 
estimated by a method other than the method of moments. 
So, this method is not applicable to fitting distributions 
like normal. The test statistic was used in the case of Poisson and Binomial distributions by Fisher (1925) but no proper 
mathematical justification was provided. Cochran (1936) and Rao and Chakravarty (1956) calculates a form of the 
approximate expression of the mean and variance of the distribution of the test statistic and also provide justification 
of implementing this test for Poisson and Binomial distributions under the null hypothesis.
 
Mooley (1973) uses this test for fitting gamma distribution to the monsoon rainfall data and estimates the 
unknown parameters using maximum likelihood estimation. Thus, the estimate of the population variance and the 
sample variance do not coincide. Mooley (1973) states that the method works well for any probability model 
satisfying such criterion. In Section \ref{sec:exposition} we provide theoretical justification why the statement 
is incorrect, demonstrating the issue with the exponential distribution; we also conduct a simulation
study, justifying the same with several other distributions -- gamma, log-normal and Weibull. 
In Section \ref{sec:check_validity} we identify a large class of distributions where we can simply 
check whether or not the usual 
asymptotics as in the cases of Poisson and Binomial are valid.
In particular, we discuss and derive the asymptotics of the variance ratio 
test for a large class of distributions with finite fourth moment when the population variance can be written 
as a differentiable function of the population mean.

\section{Analytical and empirical exposition of the misuse of the variance ratio test}
\label{sec:exposition}

Suppose that $X_1,X_2,\ldots,X_n$ is a random sample of size $n$ from the exponential distribution with 
mean $\lambda$, where the density is given by 
\begin{eqnarray}
      \nonumber f(x) &=& \frac{1}{\lambda} e^{-\frac{x}{\lambda}}.
\end{eqnarray}
In the above, the mean $\lambda$ is unknown, which we assume to be estimated from the sample using the 
maximum likelihood estimation (MLE) 
procedure. 

With the above set-up, Theorem \ref{theorem:theorem1} shows that the asymptotic distribution
of the variance ratio test statistic is not $\chi^2_{n-1}$.

\begin{theorem}
\label{theorem:theorem1}
The asymptotic distribution of the variance ratio test statistic, under the null hypothesis that a random 
sample of size $n$ comes from a one-parameter exponential distribution, is not comparable with central 
$\chi^2 $ distribution with $n-1$ degrees of freedom.
\end{theorem}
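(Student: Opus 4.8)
The plan is to make the test statistic explicit in the exponential case, identify the natural centering and scaling, read off its asymptotic law from the multivariate central limit theorem together with the delta method, and then compare the resulting limit with the first two moments of $\chi^2_{n-1}$ to see that the two cannot coincide.

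First I would use that the MLE of $\lambda$ is $\overline{X}$ and that the exponential variance is $\lambda^2$, so $\widehat{{\sigma_F}^2}=(\overline{X})^2$ and the statistic collapses to
\[
{\chi_\nu}^2 \;=\; \frac{\sum_{i=1}^n (X_i-\overline{X})^2}{(\overline{X})^2} \;=\; n\!\left(\frac{\overline{X^2}}{\,(\overline{X})^2\,}-1\right), \qquad \overline{X^2}:=\frac1n\sum_{i=1}^n X_i^2 .
\]
Writing $h(a,b)=b/a^2$ and $W_n=h(\overline{X},\overline{X^2})$, the strong law gives $W_n\to h(\lambda,2\lambda^2)=2$, so ${\chi_\nu}^2$ is centered at $n$, which is consistent to leading order with $E[\chi^2_{n-1}]=n-1$. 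The content is in the fluctuations: I would use that for the exponential law $E[X^k]=k!\,\lambda^k$, so the single-observation covariance matrix $\bSigma$ of $(X,X^2)$ is completely explicit (one gets $\mathrm{Var}(X)=\lambda^2$, $\mathrm{Cov}(X,X^2)=4\lambda^3$, $\mathrm{Var}(X^2)=20\lambda^4$). The bivariate CLT gives $\sqrt n\big((\overline{X},\overline{X^2})-(\lambda,2\lambda^2)\big)\xrightarrow{d}N(\bzero,\bSigma)$, and the delta method with $\nabla h(\lambda,2\lambda^2)=(-4/\lambda,\,\lambda^{-2})$ yields $\sqrt n\,(W_n-2)\xrightarrow{d}N\big(0,\nabla h^{\top}\bSigma\,\nabla h\big)$; a direct computation gives $\nabla h^{\top}\bSigma\,\nabla h=4$, hence
\[
\frac{{\chi_\nu}^2-n}{\sqrt n}\;=\;\sqrt n\,(W_n-2)\;\xrightarrow{d}\;N(0,4).
\]

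To finish I would compare this with $\chi^2_{n-1}$, for which $(\chi^2_{n-1}-(n-1))/\sqrt{n-1}\xrightarrow{d}N(0,2)$; since $n-(n-1)=o(\sqrt n)$, the same centering and $\sqrt n$-scaling applied to $\chi^2_{n-1}$ produces an $N(0,2)$ limit, not $N(0,4)$. Equivalently, the dispersion of ${\chi_\nu}^2$ about its mean is asymptotically twice (in variance) that of $\chi^2_{n-1}$, so the two asymptotic distributions are not comparable, which is the assertion of the theorem.

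The calculations above are routine; the one genuine point of care is the normalization — both ${\chi_\nu}^2$ and $\chi^2_{n-1}$ diverge, so ``not comparable'' must be read as a statement about the centered and $\sqrt n$-scaled limits — together with checking that the delta-method variance is genuinely nonzero even though the first coordinate of $\bSigma\,\nabla h$ vanishes, a cancellation special to the one-parameter exponential family that does not, however, annihilate the limiting variance. This plug-in/CLT/delta-method template is exactly the mechanism exploited more generally in Section~\ref{sec:check_validity}, so the exponential case serves as a prototype rather than an isolated trick.
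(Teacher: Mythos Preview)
Your argument is correct, but it is not the route the paper takes for Theorem~\ref{theorem:theorem1}. The paper's own proof follows Rao and Chakravarti: it conditions on the sufficient statistic $T=\sum_i X_i$, uses the fact that $T$ is gamma-distributed to invert Laplace transforms and obtain $E(S^2\mid T=t)$ and $E(S^4\mid T=t)$ exactly, and from these reads off $E(D\mid T)=n(n-1)/(n+1)$ and $Var(D\mid T)\approx 4(n-1)$; since the conditional variance is free of $t$, $Var(D)\approx 4(n-1)\neq 2(n-1)$. Your approach---bivariate CLT on $(\overline{X},\overline{X^2})$ plus the delta method with $h(a,b)=b/a^2$---is instead exactly the machinery the paper builds in Theorem~\ref{theorem:theorem2} and then specializes in Section~\ref{subsubsec:exponential}, where it explicitly notes that this ``provides a straightforward way of proving Theorem~\ref{theorem:theorem1}.'' The conditioning proof buys exact finite-$n$ moments and a visible parallel with the classical Poisson/Binomial treatments; your proof is shorter, purely asymptotic, and immediately extensible to the general class of Section~\ref{sec:check_validity}. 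Both land on the same diagnostic, asymptotic variance $4n$ against the $2n$ of $\chi^2_{n-1}$.
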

\begin{proof}
The estimate of the unknown parameter $\lambda$ is 
$\lambda_{MLE}=\frac{X_1+ X_2+\cdots+ X_n}{n}=\overline{X}$, that is, the sample mean. 
The population variance is $\lambda^2$ and thus, the MLE of the population variance is 
${\widehat{\lambda^2}}_{MLE}={({\widehat{\lambda}}_{MLE})}^2={\overline{X}}^2$. Hence, the test statistic 
in our case is given by
\begin{eqnarray}
\nonumber D &=& \sum_{i=1}^{n}\frac{{(X_i-\overline{X})}^2}{\overline{X}^2}. 
\end{eqnarray}


For the remaining part of the proof we follow Rao and Chakravarti (1956) who provide 
justification of the asymptotic properties of the variance ratio test in the case of the Poisson distribution,
but make necesary modifications
to accommodate our case of continuous distribution. 

We note that the sample total given by $T=X_1+ X_2+\cdots+ X_n$ is sufficient for $\lambda$. 
Here $T$ follows the gamma distribution with shape parameter $n$ and scale parameter $\lambda$; the density 
is given by
\begin{eqnarray}
\nonumber f_T(t) &=& \frac{1}{\lambda^n \Gamma(n)}e^{-\frac{t}{\lambda}}t^{n-1}.
\end{eqnarray}

The conditional density of $X_1,X_2,\ldots,X_n$ given $T=t$ is given by 
\begin{eqnarray}
\nonumber f_{X_1,X_2,\ldots,X_n |T=t}(x_1,x_2,\ldots,x_n )=\frac{\Gamma(n)}{t^{n-1}}. 
\end{eqnarray}

Now, $E(X_i|T)= \overline{X}=T/n$ and thus we can express the variance ratio test statistic in the form
\begin{eqnarray}
\nonumber D &=& \sum_{i=1}^{n}\frac{{(X_i-E(X_i|T))}^2}{E(X_i|T)^2}. 
\end{eqnarray}

Now, by the definition of conditional expectation, we have, 
for any measurable function $\phi(x_1,x_2,\ldots,x_n)$: 
\begin{eqnarray}
\nonumber \int_{0}^{\infty}E(\phi|T=t)f_T(t)dt &=& E(\phi),
\end{eqnarray}
which, in our case, translates into
\begin{eqnarray}
\nonumber \int_{0}^{\infty}E(\phi|T=t)e^{-\frac{t}{\lambda}}t^{n-1}dt &=& E(\phi){\lambda^n}\Gamma(n). 
\end{eqnarray}

Therefore, knowing the total expectation $E(\phi)$, the conditional expectation $E(\phi|T=t)$ can be easily obtained. 
Let us consider the statistic $S^2=\sum_{i=1}^n(X_i-\overline{X})^2=\sum_{i=1}^{n}{X_i}^2-n\overline{X}^2$
whose moments are known functions of $\lambda$. Using the above definition of conditional expectation
we derive the conditional moments of $\phi(x_1,x_2\ldots,x_n)=S^2$ as follows. 

Since $E(S^2)=(n-1)\lambda^2$, we have
\begin{eqnarray}
\nonumber \int_{0}^{\infty}E(S^2|T=t)e^{-\frac{t}{\lambda}}t^{n-1}dt &=& (n-1){\lambda^{n+2}}\Gamma(n). 
\end{eqnarray}

Now, we can write $\lambda^{n+2}=\int_{0}^{\infty}\frac{1}{\Gamma(n)}e^{-\frac{t}{\lambda}}t^{n+1}dt$, 
and thus it follows that
 \begin{eqnarray}
 \nonumber \int_{0}^{\infty}E(S^2|T=t)e^{-\frac{t}{\lambda}}t^{n-1}dt 
 &=& \int_{0}^{\infty}\frac{(n-1)\Gamma(n)}{\Gamma(n+2)}e^{-\frac{t}{\lambda}}t^{n+1}dt.
 \end{eqnarray}

We know that if $\int_{0}^{\infty}f_1(x)e^{-ax}dx=\int_{0}^{\infty}f_2(x)e^{-ax}dx$ where $f_1(x),f_2(x)$ 
and both are continuous, $a$ is some positive constant, then, $f_1=f_2$  by the uniqueness of the 
Laplace transform. As a consequence,
\begin{eqnarray}
\nonumber E(S^2|T=t) &=& \frac{(n-1)}{n(n+1)}t^2,
\end{eqnarray}  
and hence
\begin{eqnarray}
E(D|T=t) &=& E(S^2\frac{n^2}{T^2}|T=t)\nonumber \\
&=& \frac{n^2}{t^2}\frac{(n-1)}{n(n+1)}t^2\nonumber \\
&=& \frac{(n-1)n}{n+1}\nonumber \\
&\approx& n-1.\nonumber
\end{eqnarray}

Similarly, we obtain
\begin{eqnarray}
\nonumber E(S^4) &=& \frac{(n-1)(n^2+7n-6)}{n}\lambda^4
\end{eqnarray}
and
\begin{eqnarray}
\nonumber E(S^4|T=t) &=& \frac{(n-1)(n^2+7n-6)\Gamma(n)}{n\Gamma(n+4)}t^4.
\end{eqnarray}
Thus,
\begin{eqnarray}
\nonumber E(D^2|T=t) \\
\nonumber &=& E({S^4}\frac{n^4}{T^4}|T=t) \\
\nonumber &=& \frac{n^4}{t^4}\frac{(n-1)(n^2+7n-6)\Gamma(n)}{n\Gamma(n+4)}t^4 \\
\nonumber &=& \frac{{n^2}(n-1)(n^2+7n-6)}{(n+3)(n+2)(n+1)},
\end{eqnarray}
and so,
\begin{eqnarray}
\nonumber Var(D|T=t) \\
\nonumber &=& E(D^2|T=t)-{(E(D|T=t))}^2 \\
\nonumber &=& 4(n-1)\frac{1}{{(1+\frac{1}{n})^2}(1+\frac{2}{n})(1+\frac{3}{n})}
\cr \nonumber &\approx& 4(n-1).
\end{eqnarray}

Since $Var(D|T=t)$ is independent of $t$, it follows that $Var(D)=Var(D|T=t)$, which does not conform 
with the variance of the central chi-square distribution with $(n-1)$ degrees of freedom which is $2(n-1)$. 
This proves the theorem.

\end{proof}

\subsection{Simulation study to demonstrate the effect of the erroneous assumption}
\label{subsec:simstudy}


To demonstrate the effect of the erroneous assumption of $\chi^2_{n-1}$ as the asymptotic distribution
of $D$, we consider a simulation study pertaining to the cases of exponential, gamma, 
log-normal and Weibull. We calculate the values of the empirical mean and empirical 
variance for different values of the parameters for different null distributions, based on 
10,000 simulated samples in each case. 

The results are presented in Table \ref{table:table1}.
Correct usage of Cochran's variance ratio test should yield the mean and the variance close to 100 and 200 respectively 
in case (a) and 200 and 400 respectively in case (b). 
However, the results in Table \ref{table:table1} are far from the aforementioned values, clearly 
pointing towards incorrect implementation of the test. 
    
    \begin{flushleft}
       
    \begin{table} [h!]
    \label{table:table1}
             \begin{flushleft}

             \caption{Table of means and variances of the variance ratio test statistic: (a) sample size 100, (b) sample size 200 for different distributions}
             \vspace{1em}
             \vspace{1em}

             \begin{tabular}{cccccccc}
                & & \multicolumn{6}{c}{Parameter values} \\ \hline
                          &   &  & 1 & 5 & 10 & 15 & 20 \\ \hline
  Exponential     & (a)	& mean& 97.68	& 97.79	& 98.15 &	97.92 &	97.92  \\
  (mean=parameter)     &  & variance	&	364.64	& 360.17	& 373.09 &	370.65 &	362.63  \\
         & (b)	& mean	&	197.37	& 197.96	& 198.07 & 197.99	& 198.48  \\
              &  & variance	&	756.86	& 754.43	& 765.76 &	765.13	& 786.41  \\
     Gamma (scale=2,                 & (a)	& mean	&	98.95	& 99.78		& 99.97	& 99.96	& 99.93  \\
shape=parameter)        &   & variance	&	220.38	& 48.69		& 24.45	& 16.73	& 12.22  \\
             &   (b) &	mean	&	199.04	& 199.95	& 199.94 &	199.82 &	199.95  \\
                        &  & variance	&	450.37	& 100.65	& 51.21	& 34.52	& 25.42  \\
     Gamma (shape=2,           &  (a)	& mean	&	99.56	& 99.44	& 99.65	& 99.39	& 99.37  \\
  scale=parameter)      &  & variance	&	116.67	& 112.15 &	120.43 &	117.56 &	113.22  \\
             & (b)	& mean	&	199.48	& 199.37 &	199.26	& 199.39 &	199.30  \\
                         &     & variance	&	240.53	& 245.83 &	242.56 &	243.29	& 241.27  \\
   Lognormal     &  (a) & mean		&	89.29	& 98.59	& 98.95	& 98.93	& 99.00  \\
   (location=parameter,       &      & variance	&	5692.70	& 52.53	& 11.67	& 4.96	& 2.75  \\
    scale=2)     &      (b)	& mean	&	191.55	& 198.51 &	198.85 &	198.93 &	198.96  \\
                       &      & variance	&	19131.99 & 108.18 &	23.68 &	10.47 &	5.78  \\
   Lognormal    &     	(a)	& mean	&	94.36	& 84.63	& 73.93	& 80.47	& 62.68  \\
  (scale=parameter,       &   & variance	&	1608.40	& 70834.68 & 32260.12 & 700636.96 & 79507.59  \\
   location=1)     &  (b)	& mean	&	194.34	& 181.99	& 176.57 &	167.81 &	168.77  \\
                       &  & variance	&	3867.04	& 71164.76 & 513858.83	& 1669349.96 & 1529708.93  \\
    Weibull (shape=2,      &  (a)	& mean	&	100.08	& 100.03 &	100.06 &	100.06 &	100.05  \\
    scale=parameter)       &     & variance	&	3.27	& 3.30	& 3.17	& 3.42	& 3.31  \\
           &    (b) &	mean	&	200.06	& 200.08 &	200.06	& 200.05 &	200.02  \\
         &   & variance	&	6.18	& 6.27	& 6.34	& 6.14	& 6.37  \\
     Weibull (scale=1,      &	(a) &	mean	&	42.30	& 98.70	& 100.07 &	100.18	& 100.17  \\
     shape=parameter/5)    &  & variance	&	144990.96 & 130.44 & 3.22 &	5.60 &	16.31  \\
       &  	(b) &	mean	&	108.65	& 198.97 &	200.07 &	200.20	& 200.19  \\
                      &   & variance	&	1667054.35 &	292.83 &	6.18 &	10.86 &	32.22  \\      

    \hline 
              \end{tabular}
                       \end{flushleft}
                       \end{table} 
                       \end{flushleft}  

\section{Checking the validity of the $\chi^2$ assumption for the asymptotic distribution of the variance
ratio test statistic}
\label{sec:check_validity}

Theorem \ref{theorem:theorem2} below provides a way to check the validity of the 
$\chi^2_{n-1}$ assumption for the asymptotic distribution of $D$.
 
 \begin{theorem}
 \label{theorem:theorem2}
 If a random sample of size $n$ comes from a population with finite fourth moment 
 where the population variance is a differentiable function $f$ of the population mean under 
 the null hypothesis, then under the condition (which we refer to as the ``condition of approximate equality")
\begin{eqnarray}
\nonumber \frac{1}{f(\mu)^4}\left(\sigma^6{(f'(\mu))^2}-2\mu\sigma^2\mu_3f'(\mu)+{f(\mu)^2}(\mu_4-\sigma^4)\right) &\approx& 2,
\end{eqnarray}
where $\mu, \sigma^2, \mu_3, \mu_4$ are the mean, 2nd, 3rd and 4th central moments of the population respectively, 
the variance ratio test statistic is asymptotically central $\chi^2$ with $n-1$ degrees of freedom. If a function like $f$ exists 
and the ``condition of approximate equality" fails, then the variance ratio test statistic is not asymptotically $\chi^2_{n-1}$.
\end{theorem}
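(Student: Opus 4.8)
The plan is to read off the leading-order mean and variance of
\begin{eqnarray}
\nonumber D &=& \frac{\sum_{i=1}^{n}(X_i-\overline{X})^2}{f(\overline{X})}
\end{eqnarray}
(with the population mean estimated by $\overline{X}$, as in Theorem \ref{theorem:theorem1}), by linearising $D$ about the true mean $\mu$ in the spirit of the delta method, and then to compare these with $n-1$ and $2(n-1)$, the mean and variance of $\chi^2_{n-1}$ -- equivalently with the $N(n-1,\,2(n-1))$ behaviour that a genuine $\chi^2_{n-1}$ variate itself exhibits for large $n$. First I would record two ingredients: (i) the exact identity $\sum_{i=1}^{n}(X_i-\overline{X})^2=\sum_{i=1}^{n}\{(X_i-\mu)^2-\sigma^2\}+n\sigma^2-n(\overline{X}-\mu)^2$, whose expectation is $(n-1)\sigma^2=(n-1)f(\mu)$; and (ii) the joint central limit theorem for the pair $\big(\sqrt{n}(\overline{X}-\mu),\ n^{-1/2}\sum_{i=1}^{n}\{(X_i-\mu)^2-\sigma^2\}\big)$, whose limiting bivariate normal has variances $\sigma^2$ and $\mu_4-\sigma^4$ and covariance $\mu_3$. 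Finiteness of the fourth moment is exactly what legitimises (ii), and differentiability of $f$ is what we use in the next step.

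A first-order Taylor expansion of the denominator gives $f(\overline{X})=f(\mu)+f'(\mu)(\overline{X}-\mu)+o_p(n^{-1/2})$, and dividing (i) by this yields the key linearisation
\begin{eqnarray}
\nonumber D &=& n+\frac{1}{f(\mu)}\sum_{i=1}^{n}\Big[\big\{(X_i-\mu)^2-\sigma^2\big\}-f'(\mu)(X_i-\mu)\Big]+R_n,
\end{eqnarray}
where $R_n=O_p(1)$ absorbs the quadratic Taylor term, the $-n(\overline{X}-\mu)^2$ numerator correction, and the cross terms. Taking expectations and using the exact value in (i) to control the $O(1)$ contributions gives $E(D)\approx n-1$, just as in Theorem \ref{theorem:theorem1}; and since the summands in the displayed linear term are i.i.d.\ with mean zero,
\begin{eqnarray}
\nonumber \mathrm{Var}(D) &\approx& \frac{n}{f(\mu)^2}\,\mathrm{Var}\!\Big(\big\{(X_1-\mu)^2-\sigma^2\big\}-f'(\mu)(X_1-\mu)\Big)\\
\nonumber &=& \frac{n}{f(\mu)^2}\Big[(\mu_4-\sigma^4)-2f'(\mu)\mu_3+(f'(\mu))^2\sigma^2\Big].
\end{eqnarray}
Using $\sigma^2=f(\mu)$ to clear the denominator renders $\mathrm{Var}(D)$ as $(n-1)$ times the expression on the left-hand side of the ``condition of approximate equality''. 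Hence the condition is exactly the requirement that $\mathrm{Var}(D)\approx 2(n-1)$; combined with $E(D)\approx n-1$ and the asymptotic normality of the linear term, it makes $D$ share the first two moments and the Gaussian shape of $\chi^2_{n-1}$, which is the sense in which $D$ is then ``asymptotically central $\chi^2_{n-1}$''. For the converse, a genuine $\chi^2_{n-1}$ has variance exactly twice its mean, so if such an $f$ exists but the condition fails, then $\mathrm{Var}(D)\not\approx 2\,E(D)$ and $D$ cannot be comparable with $\chi^2_{n-1}$ -- mirroring the conclusion of Theorem \ref{theorem:theorem1}.

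The step I expect to be the main obstacle is the control of the remainder $R_n$: one must verify that the quadratic and higher-order Taylor remainders of $1/f(\overline{X})$, the $-n(\overline{X}-\mu)^2$ correction, and all cross terms are genuinely $O_p(1)$ -- hence negligible at the $O(n)$ scale at which the variance is read off -- and that the variables involved are uniformly integrable, so that the approximations above really describe $E(D)$ and $\mathrm{Var}(D)$ and not merely the limiting law of $D$. This is where the finite-fourth-moment hypothesis and the differentiability of $f$ do the real work; everything else is the routine moment computation above. As a consistency check, $f(\mu)=\mu^2$ (the exponential case) returns $\mathrm{Var}(D)\approx 4(n-1)$, in agreement with Theorem \ref{theorem:theorem1}, whereas $\sigma^2=\mu$ with $f'\equiv1$ returns $2(n-1)$, the Poisson value for which Cochran's test is legitimate.
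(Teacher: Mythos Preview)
Your proposal is correct and follows essentially the same route as the paper: the paper states the bivariate CLT for $(\overline{X}_n,S_n^2)$ and then applies Cram\'er's delta method to $g(x,y)=y/f(x)$, which is exactly the linearisation you write out by hand, yielding the same asymptotic $N(0,\alpha)$ law for $\sqrt{n}\big(S_n^2/f(\overline{X}_n)-1\big)$ and hence the same criterion $\alpha\approx 2$. Your explicit treatment of the remainder $R_n$ and the uniform-integrability caveat is in fact more careful than the paper, which invokes the delta method as a black box; otherwise the arguments coincide.
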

\begin{proof}
Suppose that $X_1,X_2,\ldots,X_n$ is a random sample from a population where the sufficient condition on moment
existence and the existence of a differentiable function $f$ are satisfied under the null hypothesis.
    
Applying the bivariate central limit theorem (CLT) in the context of sample moments, we obtain the joint 
asymptotic distribution of sample mean $\overline{X}_n$ and sample variance $S^2_n$ as
        \begin{eqnarray}
        \sqrt{n}\left[\begin{pmatrix}X_n\\{S_n}^2\end{pmatrix}-\begin{pmatrix}\mu\\\sigma^2\end{pmatrix}\right] 
       &\rightarrow & N_2\left(\begin{pmatrix}0\\0\end{pmatrix},\begin{pmatrix}\sigma^2 \hspace{8mm} \mu_3\\
       \mu_3  \hspace{4mm} \mu_4-{\sigma}^4\end{pmatrix}\right)\quad \mbox{in distribution}, 
       \label{eq:clt}
        \end{eqnarray}
where $\mu$ is the population mean, $\sigma^2$ is the population variance, $\mu_3$ and $\mu_4$ are the third and the 
fourth central moments of the population, respectively.
    
In the case of asymptotic normality of smooth functions of sample moments, it was shown by Cramer (1946) that for 
a mapping $\bg :{\Re^d}\rightarrow {\Re^k}$ such that $\bg'(\bf x)$, the derivative of $\bg(\bf x)$ at the point $\bf x$, 
is continuous in a neighborhood of 
$\btheta \epsilon \Re^d$, if $\bf T_n$ is a sequence of $d$-dimensional random vectors such that 
$\sqrt{n}(\bf T_n-\btheta)\rightarrow N_d(\bf 0,\bSigma)$
where $\bSigma$ is a $d\times d$ covariance matrix, then
    
    \begin{eqnarray}
    \sqrt{n}(\bg(\bf T_n)-\bg(\btheta))\rightarrow N_k(\bzero,\bg'(\btheta)\bSigma {\bg'(\btheta)}^T) 
    \quad\mbox{in distribution}. 
       \label{eq:delta_method}
    \end{eqnarray}
    
In our case, the population mean is estimated by the sample mean, and since $\sigma^2=f(\mu)$, 
the population variance is estimated by $f(\overline{X}_n)$ which is neither equal nor proportional to $S^2_n$ 
(otherwise $\sigma^2$ can not be written as a function of $\mu$ only). 
Hence 
$D=\sum_{i=1}^{n}\frac{{(X_i-\overline{X}_n)}^2}{f(\overline{X}_n)}=
n\frac{{S_n}^2}{f(\overline{X}_n)}$ can be used as a test statistic. 
Then using the delta method (\ref{eq:delta_method}) associated with (\ref{eq:clt}), we have
    \begin{eqnarray}
    \sqrt{n}\left(\frac{S^2_n}{f(\overline{X}_n)}-\frac{\sigma^2}{f(\mu)}\right) &\rightarrow&
    N\left(0,\frac{1}{f(\mu)^4}\left(\sigma^6{(f'(\mu))^2}-2\mu\sigma^2\mu_3f'(\mu)+{f(\mu)^2}(\mu_4-\sigma^4)\right)\right)
    \nonumber\\
    \label{eq:asymp}
     \end{eqnarray} in distribution.

    
Now, if
 \begin{eqnarray}
 \nonumber \frac{1}{f(\mu)^4}(\sigma^6{(f'(\mu))^2}-2\mu\sigma^2\mu_3f'(\mu)+{f(\mu)^2}(\mu_4-\sigma^4)) &\approx& \alpha,
 \end{eqnarray}
 where $\alpha=2$, then the asymptotic distribution (\ref{eq:asymp}) is $N(n,2n)$. Since 
 $N(n,2n)\approx\chi^2_{n-1}$, in this case the variance test statistic $D$ 
 is asymptotically distributed as $\chi^2_{n-1}$.

On the other hand, if $\alpha$ is significantly different from 2, then $E(D)=n$ but 
$Var(D)\neq 2(n-1)$, even asymptotically. Hence, the asymptotic distribution of $D$
can not be central $\chi^2_{n-1}$ in this case.

\end{proof}

The following examples can be viewed as corollaries to Theorem \ref{theorem:theorem2}.

\subsection{Illustrative examples}
\label{subsec:examples}

\subsubsection{Poisson case}
\label{subsubsec:poisson}    
In the case of Poisson distribution with parameter $\lambda$, the delta method with $g(x,y)=\frac{y}{f(x)}$ 
and $f(x)=x$ yields 
\begin{eqnarray}
\nonumber \frac{1}{f(\mu)^4}\left(\sigma^6{(f'(\mu))^2}-2\mu\sigma^2\mu_3f'(\mu)+{f(\mu)^2}(\mu_4-\sigma^4)\right) &=& 2.
\end{eqnarray}
 Hence, 
$D=\sum_{i=1}^{n}\frac{{(X_i-\overline{X}_n)}^2}{\overline{X}_n}
=n\frac{S^2_n}{\overline{X}_n}$
has asymptotic distribution $N(n,2n)\approx \chi^2_{n-1}$.

\subsubsection{Binomial case}
\label{subsubsec:binomial}
In the case of Binomial distribution with size $M$ and probability $p$, applying the delta method 
with $g(x,y)=\frac{y}{f(x)}$ and $f(x)=\frac{x(M-x)}{M}$, we obtain 
\begin{eqnarray}
\nonumber \frac{1}{f(\mu)^4}\left(\sigma^6{(f'(\mu))^2}-2\mu\sigma^2\mu_3f'(\mu)+{f(\mu)^2}(\mu_4-\sigma^4)\right) 
&=& 2+\frac{1}{M}.
\end{eqnarray}
Since, for large enough $M$, $\frac{1}{M}\approx 0$,  
$D=\sum_{i=1}^{n}\frac{{(X_i-\overline{X}_n)}^2}
{\overline{X}_n(M-\overline{X}_n)}=n\frac{S^2_n}{\overline{X}_n(M-\overline{X}_n)}$ 
has $N(n,2n)\approx \chi^2_{n-1}$ as the asymptotic distribution.
    
\subsubsection{Exponential case}
\label{subsubsec:exponential}
In the case of exponential distribution with mean $\lambda$, let $g(x,y)=\frac{y}{f(x)}$ and $f(x)=x^2$. 
The delta method then yields
\begin{eqnarray}
\nonumber \frac{1}{f(\mu)^4}\left(\sigma^6{(f'(\mu))^2}-2\mu\sigma^2\mu_3f'(\mu)+{f(\mu)^2}(\mu_4-\sigma^4)\right) &=& 4.
\end{eqnarray}
Hence, $D=\sum_{i=1}^{n}\frac{{(X_i-\overline{X}_n)}^2}{{\overline{X}_n}^2}
    =n\frac{S^2_n}{{\overline{X}_n}^2}$ has the asymptotic distribution $N(n,4n)$, 
which can not be approximated by $\chi^2_{n-1}$. So, the test can not be used for exponential distributions. 
Note that this method of validation provides a straightforward way of proving Theorem \ref{theorem:theorem1}.
    
\section{Relevance of the study in rainfall data}    
\label{sec:rainfall}

Using data from 39 well-distributed and long-record stations over a relevant study region, and implementing the
$\chi^2$ goodness-of-fit test, the Kolmogorov-Smirnov test and the variance ratio test,
Mooley (1973) found the two-parameter gamma distribution to be the most suitable probability model 
among the Pearsonian models that show good fit to monthly rainfall in the Asian summer monsoon.
After implementing the variance ratio test in the context of weekly rainfall total, 
Hargreaves (1975) obtained the two-parameter incomplete gamma distribution suitable for the modeling purpose.
Sarker et al. (1982) computed the lowest amount of rainfall in 
the dry farming tract of north-west and south-west India
at different probability levels 
by fitting the same probability model, which was obtained by implementing the same variance ratio test.
On the basis of the same model they also considered 50\% probabilistic rainfall as dependable precipitation 
on a weekly basis.
Biswas and Khambete (1989) computed the lowest amount of rainfall at different probability levels by 
fitting the same model, which was obtained by implementing the same variance ratio test 
on a data regarding week by week total rainfall of 82 stations in dry farming tract of Tamilnadu state of 
south-east India. Goel and Singh (1999) fitted the weekly rainfall data of Soan catchment in 
sub-humid area of Shivalik region of northern India to the same model, which they obtained by implementing the same test.

\section{Effects of the erroneous assumption by Mooley (1973) on inference: illustrations with simulated and real data}
\label{sec:investigate_effects}

Among the 39 Rain gage stations considered in Mooley (1973), the null hypothesis that the monthly rainfall 
series follow gamma distribution, was rejected for three cases. In particular, the null hypothesis associated with
the September rainfall of Allahabad, India, 
and July rainfall of Zi-Ka-Wei, China, were both rejected at level 0.05, using the variance ratio test. 
The test statistic in the case of June rainfall of Nagpur, India, was found to be significant at level 0.01. 
In case of the $\chi^2$ test, the null hypothesis was accepted at level 0.05 for all the considered cases. 

For adequate investigation of the above results obtained by Mooley, the actual data set used in Mooley (1973) 
is necessary. But unfortunately the data set is unavailable.
As a result, we are compelled to conduct further simulation studies to demonstrate that Mooley's implementation can
lead to rejection of the correct null hypothesis and acceptance of the false null hypothesis with high probability. 
However, in Section \ref{subsec:realdata} we also investigate the effects of Mooley's erroneous assumption 
using a real data set obtained from an independent source.

\subsection{Simulation based illustration of false rejection and false acceptance of the null hypothesis 
using Mooley's implemetation}

\subsubsection{First simulation study: false rejection of $H_0$}
First we draw 100,000 samples of size 100 from the gamma distribution with scale parameter $\lambda= 2$ and shape parameter 
$\alpha= 0.5$; the histogram of the observed test statistic is presented in Figure \ref{fig:fig1}. 
Now, according to the claim of Mooley (1973), 
the test statistic should be distributed as $\chi^2$ with degrees of freedom $100-1=99$. We draw the cut-offs as the vertical 
lines for a goodness of fit test of level $0.05$. As we draw samples from the null hypothesis, the expected number of 
rejections should be 5000. But here we see that the number of rejections is $13,214$, which is far above than the expected
number of rejections. Thus, this experiment demonstrates that there is a high chance of rejection of the null hypothesis even if the 
sample actually arises from the distribution under $H_0$.

\begin{figure}
\centering
\includegraphics[width=4in]{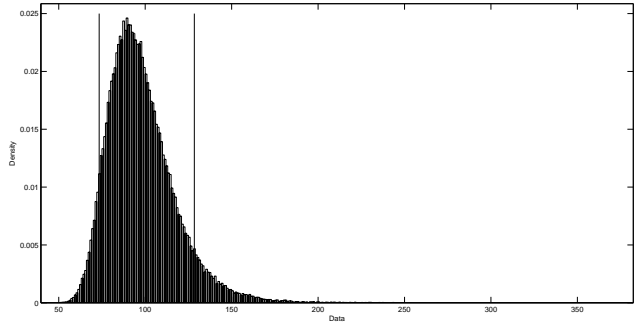}

\caption{Histogram of the null distribution of the test statistic; the vertical lines indicate the cut-off 
levels according to Mooley (1973).}
\label{fig:fig1}
\end{figure}

\subsubsection{Second simulation study: false acceptance of $H_0$}
We conduct another simulation study where we simulate 100,000 samples of size 30 from a mixture of three gamma distributions 
with equal weight (that is, each mixture component has mixing probability $\frac{1}{3}$). 
For the three gamma components, the parameters were chosen in such a way that the modes of the components are $1$, $5$ and $9$ 
respectively, while the variance under each component is specified to be $1$. 
The true mixture density, depicted in Figure \ref{fig:fig2}, is clearly significantly different from any single gamma distribution.

According to the claim of Mooley (1973), 
the number of cases of the rejection of the null hypothesis should be large enough, much more than 5\%, that is, 5000 cases. 
But in our simulated example only 3470 cases were rejected, even much less than the expected number of rejections 
under the null (the cases lying outside the cut-offs are shown in Figure \ref{fig:fig3}). This experiment thus demonstrates that
this test may often lead to false acceptance of the null hypothesis that the data is distributed as gamma while in reality 
the actual distribution is very far from gamma.

\begin{figure}
\centering
\includegraphics[width=4in]{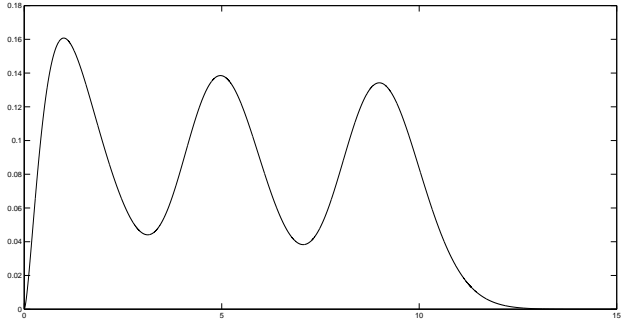}

\caption{Density of the mixture of three gamma distributions: significantly different from gamma.}
\label{fig:fig2}
\end{figure}

\begin{figure}
\centering
\includegraphics[width=4in]{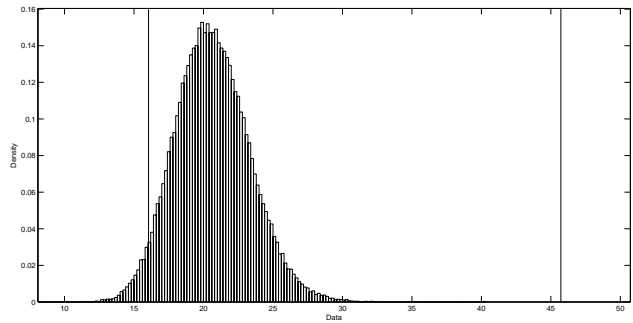}

\caption{Histogram of the null distribution of the test statistic; the vertical lines indicate the cut-off levels 
according to Mooley (1973).}
\label{fig:fig3}
\end{figure}

Next, we illustrate the issue of false rejection and false acceptance of the null hypothesis under Mooley's implementation
with a real data set.

\subsection{Illustration with June-September rainfall of India}
\label{subsec:realdata}

We obtain the dataset of All India Seasonal Rainfall Series (1901-2009) from the website of 
India Meteorological Department (http://www.imd.gov.in/section/nhac/dynamic/data.htm). In Figure \ref{fig:fig4}, we present 
the histogram of the observed dataset to which we fit a gamma distribution. For the $\chi^2$ goodness-of-fit test 
the $P$-value turns out to be $0.0167$, that is, for a test of level $0.05$, we reject the null hypothesis that 
the data is distributed as gamma. From Figure \ref{fig:fig4}, it is also evident that the fit is not ``good". Now, using the 
variance test in this set-up, the MLEs of the shape and scale parameters are $9.8663$ and $91.0873$ respectively, and 
the observed variance test statistic is $107.2916$. Assuming that the claim of Mooley (1973) about the asymptotic null 
distribution of the variance test statistic is true, the $P$-value turned out to be 
$$P(\left\vert Z\right\vert >|\sqrt{2 \cdot 107.2916}-\sqrt{217}|)=0.9344,$$
where $Z\sim N(0,1)$, leading to acceptance of the gamma distribution. However, poor fit exhibited by 
Figure \ref{fig:fig4}, rejection of the gamma distribution by the formal
$\chi^2$ test, and wisdom gained from our analytical and simulation based investigations regarding Mooley's implemntation
strongly suggests that this variance test wrongly accepts the false null hypothesis. 

\begin{figure}
\centering
\includegraphics[width=4in]{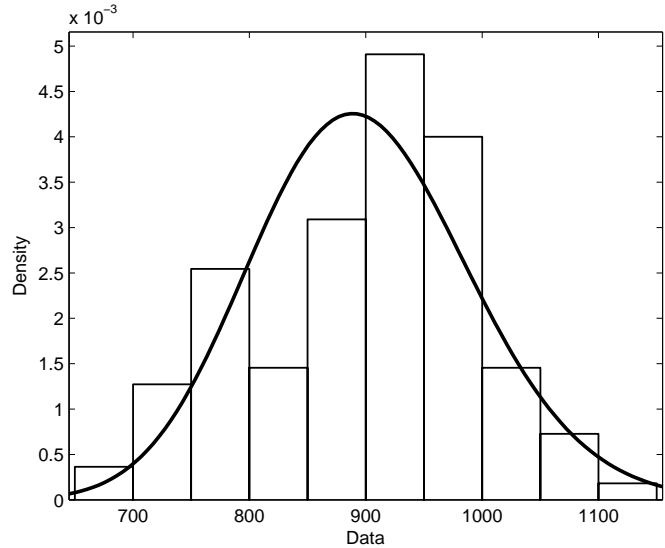} 

\caption{Histogram of the observed dataset and the fitted gamma density.}
\label{fig:fig4}
\end{figure}

\section{Discussion and Conclusion}
\label{sec:discussion}
The increased power of the variance test over Pearson's chi-square goodness-of-fit test was strikingly 
shown in some sampling experiments conducted by Berkson (1940), in a situation where the data followed a 
Binomial distribution. Berkson (1938) presented some data to illustrate the cases where the variance ratio 
test was significant but Pearson's chi-square goodness-of-fit test was not, in the contexts where 
the data followed a Poisson or a Binomial distribution.

However, when the underlying data follow a two-parameter gamma distribution, the asymptotic distribution 
of the variance ratio test statistic is largely dependent on the shape parameter and, as a consequence, the assumption 
that the test statistic asymptotically follows a central chi-square distribution, is erroneous 
and leads to misuse of the variance ratio test. 
Mooley (1973) seems to be the first to commit this misuse 
and a significant number of other authors followed the same path leading to misuse. 
Indeed, as we have shown in this article, for probability distributions like exponential, log-normal, Weibull, etc.,
which are frequently used in modelling rainfall data, the asymptotic distribution of the variance ratio test 
statistic is not commensurate with the chi-square distribution with degrees of freedom $n-1$.  
Hence, the test should be used very cautiously, particularly by meteorologists and other scientists.
 
To aid the meteorologists and the other practising scientists, in this article we have provided 
simple ways to check the validity of the variance ratio test for a
large class of distributions satisfying a few mild conditions.
In fact, as a necessary condition for applicability of the test, 
first it should be checked whether the limiting mean and variance are comparable 
with $n-1$ and $2(n-1)$ respectively.

If the variance ratio test is not applicable, it is better to use chi-square goodness-of-fit test 
in spite of having less power and loss of information by clubbing the data into different classes. 
At least it is theoretically correct and can be used in the case of fitting a mixture of zero rainfall and 
non-zero rainfall data. In the case of fitting non-zero rainfall data, it is more appropriate 
to use the Kolmogorov-Smirnov test than chi-square goodness-of-fit test in cases where the parameters 
under the null hypothesis are fully specified.
    
    \section*{References}
    \vskip2.mm
    \noindent Barger, Gerald L., and Thom, Herbert C. S., 1949: Evaluation of Drought Hazard, Agronomy Journal, Vol. 41, No. 11, Geneva, N.Y., NOV., pp. 519-526.
    \vskip3.mm
     \noindent Berkson, J., 1938: Some difficulties of interpretation encountered in the application of the chi-square test. J. Amer. Statist. Assoc., 33, 526-536
    \vskip3.mm
    \noindent Berkson, J., 1940: A note on the chi-square test, the Poisson and the Binomial. J. Amer. Statist. Assoc, 35, 362-367
    \vskip3.mm
    \noindent Biswas, B. C., N. N. Khambete and S. S. Mondal, 1989: Weekly rainfall probability analysis of dry farming tract of Tamil Nadu. Mausam, 40 (2), 197–206.
    \vskip3.mm
    \noindent Burgueo, A., Martnez, M. D., Lana, X., and Serra, C., 2005: Statistical distributions if the daily rainfall regime in Catalonia (Northeastern Spain) for the years 1950--2000. Int. J. Climatol. 25, 1381-1403. 
    \vskip3.mm
    \noindent Cochran, W. G., 1936: The chi-square distribution for the Binomial and Poisson series, with small expectations. Annals of Eugenics, 7, 207–217.
    \vskip3.mm
    \noindent Cochran, W. G., 1954: Some methods for strengthening the common chi-square tests. Biometrics. 10 (4), 417–451.
    \vskip3.mm
    \noindent Cramer, H., 1946: Mathematical methods of statistics. Princeton University Press, Princeton, NJ, 545 pp.
    \vskip3.mm
    \noindent Duan, J., Sikka, A. K., and Grant, G. E. 1995: A comparison of stochastic models for generating daily precipitation at the H. J. Andrews Experiment Forest. Northwest Science.; 69(4): 318-329.
    Fisher, R.A., 1925: Statistical methods for research workers. Hafner Publishing Company Inc., New York, 356 pp.  
    \vskip3.mm
    \noindent Fisher, R. A. and Yates, F., 1957: Statistical Tables for Biological, Agricultural and Medical Research, 5th Ed., Oliver and Boyd, Edinburgh, Scotland, 138 pp.
    \vskip3.mm
    \noindent Goel, A. K. and J. K. Singh, 1999: Incomplete gamma distribution for weekly rainfall of Unai, Himachal Pradesh. J. Agric. Eng., 36 (1), 61–74.
    \vskip3.mm
    \noindent Hargreaves, G. H., 1975: Water requirements manual for irrigated crops and rainfed agriculture. EMBRAPA and Utah State University Publication., 74-D-158, pp. 40.
    \vskip3.mm
    \noindent Kwaku, X. S., and Duke, O. 2007: Characterization and frequency analysis of one day annual maximum and two to five consecutive days maximum rainfall of Accra, Ghana. ARPN Journal of Engineering and Applied Sciences; vol. 2, no. 5: 27-31.
    \vskip3.mm
    \noindent Mooley, Diwakar A. and Crutcher, Harold L., 1968: An Application of Gamma Distribution Function to Indian Rainfall, ESSA Technical Report EDS 5, U.S. Department of Commerce, Environmental Data Service, Silver Spring, Md., Aug., 47 pp.
    \vskip3.mm
    \noindent Mooley, D. A., 1973: Gamma distribution probability model for Asian summer monsoon monthly rainfall. Monthly Weather Review, 101 (2), 160–176.
    \vskip3.mm 
    \noindent Rao, C. R. and Chakravarti, I. M., 1956: Some small sample tests of significance for a Poisson distribution. Biometrics, 12(3), 264-282.
    \vskip3.mm
    \noindent Sarker, R. P., B. C. Biswas and N. N.  Khambete, 1982: Probability analysis for short period rainfall in dry farming tract in India. Mausam, 33 (3), 269–284.
    \vskip3.mm
    \noindent Sen, Z., and Eljadid, A. G. 1999: Rainfall distribution functions for Libya and Rainfall Prediction. Hydrol. Sci. J.;4(5): 665-680.
    \vskip3.mm
    \noindent Sharma ,M. A., Singh, J. B. 2010: Use of Probability Distribution in Rainfall Analysis. New York Science Journal, 3(9), 40-49.
    \vskip3.mm
    \noindent Todorovic, P., and Woolhiser, D. A. 1975, A stochastic model of n-day precipitation. J. Appl. Meteor. ,14, 17-24.

\end{document}